\def \k{\mathbf{k}}
\def \hx{\hat{x}}
\def \hs{\hat{s}}
\def \hu{\hat{u}}
\def \hsig{\hat{\sigma}}
\def \x{\mathbf{x}}
\def \y{\mathbf{y}}
\def \calG{\mathcal{G}}
\def \calH{\mathcal{H}}
\def \calO{\mathcal{O}}
\def \calT{\mathcal{T}}
\def \calV{\mathcal{V}}
\def \bbR{\mathbb{R}}
\def \bq{\begin{equation}}
\def \eq{\end{equation}}
\def \bqa{\begin{eqnarray}}
\def \eqa{\end{eqnarray}}
\newcommand{\dbar}[1]{\overline{\overline{#1}}}
\newtheorem{lem}{Lemma}
\newtheorem{prop}{Proposition}
\newtheorem{cor}{Corollary}
\begin{document}
\begin{frontmatter}

\title{A single layer representation of the scattered field for multiple scattering problems}

\author{Didier Felbacq, Anthony Gourdin and Emmanuel Rousseau}

\affiliation{organization={L2C, University of Montpellier, CNRS},
	addressline={Place Bataillon}, 
	city={Montpellier},
	postcode={34095}, 
	country={France}}

\begin{abstract}
	\textcolor{black}{
The scattering of scalar waves by a set of scatterers is considered. It is proven that the scattered field can be represented as an integral supported by any smooth surface enclosing the scatterers. This is a generalization of the series expansion over spherical harmonics and spherical Bessel functions for spherical geometries. More precisely, given a set of scatterers, the field scattered by any subset can be expressed as an integral over any smooth surface enclosing the given subset alone. It is then possible to solve the multiple scattering problem by using this integral representation instead of an expansion over spherical harmonics. This result is used to develop an extension of the Fast Multipole Method in order to deal with subsets that are not enclosed within non-intersecting balls. 
}
\end{abstract}


\begin{keyword}
scattering theory \sep scalar waves \sep integral representation
	
	
	
\end{keyword}
\end{frontmatter}

\section{Introduction}
\textcolor{black}{
We consider the scattering of scalar waves by a finite set of obstacles in $\mathbb{R}^p, \, p=2,3$ in the harmonic regime, with a time dependence of $e^{-i\omega t}$. Let us denote $B(O,R)$ a ball of radius $R$ and center $O$ containing the scatterers. It is known, cf. \cite{martin}, that for $\x \in \bbR^p \setminus B(O,R)$ the scattered field can be represented as a series of spherical harmonics :
\bq\label{multi}
u^s(\x)=\left
\{\begin{array}{l}
\sum_{n,m} u^s_{nm} h^{(1)}_n(k x) Y_n^m(\hx) \hbox{ for } p=3\\
\sum_{n} u^s_{n} H^{(1)}_n(k x) e^{in\theta}  \hbox{ for } p=2
\end{array} \right.
,\, x>R.
\eq
Here, $h^{(1)}_n$ is the spherical Hankel function of first type and order $n$, $H^{(1)}_n$ is the Hankel function of first type and order $n$ cf. \cite{abram}, and $\theta$ is the polar angle of $\x$ in $\bbR^2$.
Whether the functions defined by these series can be extended inside the ball is a difficult problem known as Rayleigh hypothesis. It was essentially solved theoretically in the 80' for the 2D problem, cf. \cite{millar,cadilhac} where it was shown that the short answer is that, generically, it cannot be extended due to the presence of singularities. A numerical study for the 3D case is provided in \cite{auguie} but it seems that the question is largely open for that dimension.
More generally, if two sets of scatterers can be enclosed in two non-intersecting balls then the interaction between the two sets can be handled by using spherical harmonics expansions. This fails to be true when the two sets cannot be separated in this way.
}

\textcolor{black}{
The purpose of this work is to provide an extension of the series representation in the latter case, by means of integral representations of the scattered field. An early work in that direction can be found in \cite{maystre}, see also \cite[chap. 6.12]{martinbook}. More precisely, we show that, given a set of scatterers and an arbitrary smooth surface enclosing the scatterers, the scattered field can be represented by a single layer integral supported by the surface \cite{colton}. This allows to handle the situation where the scatterers cannot be enclosed inside non-intersecting balls, by replacing the spherical harmonics expansions by integral representations and to describe the interaction by means of these integrals. 
%
As an example of this situation, let us consider the set of scatterers depicted in fig. \ref{astroids5}. This set of scatterers can be decomposed into subsets enclosed into regions $\Omega_j$ bounded by a surface $\Gamma_j$. If the domains $\Omega_j$ could be enclosed inside non-intersecting balls, we could use the Fast Multipole Method \cite{FMM} to account for the coupling between the subsets, by using a spherical harmonics expansion. This cannot be done here. We propose an extension of this algorithm by replacing the series expansions by the single layer integrals supported by the surfaces $\Gamma_j$. This allows to bypass the non-intersecting balls limitation of the Fast Multipole Method. Of course, the FFM still remains extremely relevant for large scattering configurations, see \cite{Bruno,Chiang,Ganesh,Gimbutas,Matsushima} for applications of the FMM.
\begin{figure}[h!]
	\begin{center}
		\includegraphics[width=8cm]{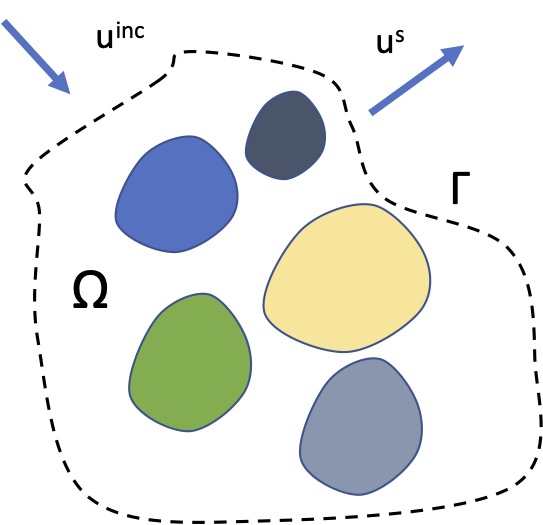}
		\caption{Sketch of the scattering problem under study. \label{scatprob}}
	\end{center}
\end{figure}
}
\section{Setting for the scattering problem}
Let us specify a few notations. The unit sphere of $\bbR^p$ is denoted $S^{p-1}$. For $\x \in \bbR^p$, we denote $x=\left | \x \right |$ the norm of $\x$, and  $\hx=\x/x$ , $k=\omega/c$. We denote $\calH_a u =\Delta u+k^2 a u$, where the potential $a$ belongs to $L^{\infty}(\bbR^p)$. The fundamental solution $g^+$ of the Helmholtz equation: $\calH_1 g^+=\delta_0$ with outgoing wave condition is: $g^+(\x)=-\frac{1}{4\pi x} e^{i k x}$ for $p=3$ and $g^+(\x)=-\frac{i}{4} H_0^{(1)}(kx)$ for $p=2$. The function $H_0^{(1)}$ is the Hankel function of first type \cite{abram}.

For $p=3$, the following expansion of the Green functions over the spherical harmonics holds, cf. \cite{FMM}:
\bq \label{greenexpanse}
g^+(\x-\x')=-\frac{e^{ik|\x-\x'|}}{4\pi |\x-\x'|}=-ik\sum_{n,m} j_n(k x_{<}) \, h_n^{(1)}(kx_{>})\, \overline{Y_n^m}(\hx') \, Y_n^m(\hx),
\eq
where $x_<=\min(x,x')$ and $x_>=\max(x,x')$.


Let $\Omega$ be a bounded domain of $\mathbb{R}^p$ with boundary $\partial \Omega=\Gamma$.
 For $u \in H^1(\Omega)$ (the Sobolev space of function of $L^2(\Omega)$ with gradient in $L^2(\Omega)$, see \cite[chap. 2]{cessenat} for more results on Sobolev spaces), the interior traces \cite[chap. 2]{cessenat} of $u$  and its normal derivative on $\Gamma$ are denoted by:
\bq
\gamma^-(u)=\left .u \right |_{\Gamma},\, \gamma^-(\partial_n u)=\left .\partial_n u \right |_{\Gamma}.
\eq
For fields belonging to $H^1_{\rm loc}(\bbR^p \setminus \overline{\Omega})$, we denote the exterior traces by:
\bq
\gamma^+(u)=\left .u \right |_{\Gamma},\, \gamma^+(\partial_n u)=\left .\partial_n u \right |_{\Gamma}.
\eq
Given a field $u \in H^1_{\rm loc}(\bbR^p)$, we denote $[u]_{\Gamma}$ the jump of $u$ across $\Gamma$, i.e.:
\bq
[u]_{\Gamma}=\gamma^+(u)-\gamma^-(u) \hbox{ and } [\partial_n u]_{\Gamma}=\gamma^+(\partial_n u)-\gamma^-(\partial_n u).
\eq

%
Let us consider the time-harmonic scattering problem pictured in fig. \ref{scatprob}. The domain $\Omega$ contains a collection of scatterers characterized by a potential $a$ such that $a-1$ has a compact support $K \subset \Omega$. Let us remark that our analysis extend to the case of non-penetrable scatterers (i.e. no fields inside the scatterers).

The set of scatterers is illuminated by an incident field $u^{\rm inc}(x)$. The field  $u^{\rm inc}(x)$ satisfies the Helmholtz equation: $\calH_1 u^{\rm inc}=0$ in $\bbR^p$. 


The scattering problem consists in finding the scattered field $u^s(\x)$ such that  the total field $u=u^{\rm inc}+u^s$ satisfies:
$$ \calH_a  u=0 \, ,$$ and $u^s$ satisfies a radiation condition at infinity: 
$$
\partial_n u^s -ik u^s=o\left(x^{-1}\right)  \hbox{ and } u^s(\x)=O\left(x^{-1}\right) \hbox{, when } x \to \infty.
$$

This scattering problem has a unique solution, as stated in the following lemma:
\begin{lem}\label{scatampl}
The scattered field $u^s$ exists and is unique. There is a linear operator $\calT$ relating $u^{\rm inc}$ to $u^s$: $u^s=\calT(u^{\rm inc})$.
\end{lem}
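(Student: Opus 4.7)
The plan is to recast the scattering problem as a Lippmann--Schwinger integral equation and then invoke the Fredholm alternative. Writing $\calH_a u = \calH_1 u + k^2(a-1)u$ and using $\calH_1 u^{\rm inc}=0$, the equation $\calH_a u=0$ becomes $\calH_1 u^s = -k^2(a-1)(u^{\rm inc}+u^s)$. Because $a-1$ is supported in the compact set $K\subset\Omega$, convolution with the outgoing fundamental solution $g^+$ (which enforces the radiation condition) yields
\bq
u^s(\x) = -k^2\int_K g^+(\x-\y)\,(a(\y)-1)\,\bigl(u^{\rm inc}(\y)+u^s(\y)\bigr)\,d\y.
\eq
Setting $V u(\x)=-k^2\int_K g^+(\x-\y)(a(\y)-1)u(\y)\,d\y$, the problem restricted to $K$ takes the form $(I-V)u = u^{\rm inc}$ in $L^2(K)$, and once this is solved, $u^s=V u$ defines the scattered field on all of $\bbR^p$ and automatically satisfies the Sommerfeld radiation condition thanks to the decay of $g^+$.

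Next I would establish that $V$ is compact on $L^2(K)$ (equivalently on $H^1(K)$) by standard mapping properties of the Newtonian/Helmholtz volume potential with $L^\infty$ coefficient and compactly supported weight. Compactness reduces existence to uniqueness through the Fredholm alternative, so the real work is the uniqueness argument. Assume $u^{\rm inc}\equiv 0$ and let $u^s$ be a corresponding solution. Applying Green's identity on a large ball $B(O,R)$ and letting $R\to\infty$, the radiation condition combined with Rellich's lemma forces $u^s\equiv 0$ in $\bbR^p\setminus\overline{\Omega}$. Then the unique continuation principle for the operator $\calH_a$ with $a\in L^\infty$ propagates this zero into all of $\bbR^p$, so $u^s\equiv 0$ and $(I-V)$ is injective, hence invertible.

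Finally, linearity is free: the composition $\calT: u^{\rm inc}\mapsto V(I-V)^{-1}u^{\rm inc}$, interpreted appropriately (for instance with $u^{\rm inc}$ restricted to $K$ inside the parentheses and $V$ acting into $H^1_{\rm loc}(\bbR^p)$), is a composition of linear operators and produces the unique $u^s$ associated to $u^{\rm inc}$.

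The main obstacle is the uniqueness step. Rellich's lemma and the radiation condition handle the exterior region cleanly, but propagating the vanishing into $K$ requires the unique continuation property for $\Delta+k^2 a$ with $a\in L^\infty$; this is standard (Jerison--Kenig type results) but is the one nontrivial ingredient that I would cite rather than reprove. Everything else -- compactness of $V$, the Fredholm alternative, and the linear dependence on data -- is routine once the integral formulation is in place.
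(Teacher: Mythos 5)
Your proposal is correct and follows essentially the same route as the paper: both recast the problem as the Lippmann--Schwinger equation $u^s=(1-\calG_1\calV)^{-1}\calG_1\calV(u^{\rm inc})$ with $\calV=k^2(1-a)$ supported in $K$ and $\calG_1$ the convolution with $g^+$. The only difference is that the paper stops there and cites references for the ``classical although rather subtle'' invertibility of the resolvent, whereas you supply the standard justification (compactness of the volume potential, Fredholm alternative, and uniqueness via Rellich plus unique continuation for $L^\infty$ potentials), which is a legitimate filling-in of the step the paper delegates.
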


\begin{proof}

$\calH_{a} (u^s)=(\calH_1-\calH_{a}) (u^{\rm inc})$ and $\calV\equiv\calH_1-\calH_{a} $ is null outside the compact region $K$. Then : $\calH_{1}(u^s)=\calV(u^s)+\calV(u^{\rm inc})$ and thus: ${u^s=(1-\calG_1 \calV)^{-1} \calG_1 \calV(u^{\rm inc})}$ where the inverse operator $\calG_1=\calH_1^{-1}$ is an integral convolution operator with kernel $g^+$.
\end{proof}
The existence of the resolvent operator is classical although rather subtle (see for instance \cite{cessenat,melrose}). 
This provides a decomposition of the total field in the form:
$$
u=u^{\rm inc}+\calT(u^{\rm inc}).
$$

In the following, the proofs of the results are given for $p=3$ and can be easily adapted for $p=2$ (or, in fact, any other dimension $>1$).

\textcolor{black}{
\section{The multiple scattering problem}
\label{intrep}
\subsection{Multiple scattering by a set of scatterers: integral representation of the fields}
We consider first the situation depicted in fig. \ref{scatprob} where all the scatterers are taken into account. Let us reformulate the scattering problem in a slightly different way.  Let $u^{\rm reg}$ satisfy the following problem : $$\calH_1 u^{\rm reg}=0 \hbox{ in } \Omega \hbox{ and } \gamma^-(u^{\rm reg})=\left . u^{\rm inc}\right|_{\Gamma}.$$ Then, apart from the discrete set of wavevectors that are the eigenvalues of $-\Delta$ with the homogeneous Dirichlet condition on $\Gamma$, it holds : $$u^{\rm reg}=u^{\rm inc} \hbox{ inside } \Omega.$$ Therefore, the scattered field can be seen as the response of the scatterers to the regular field $u^{\rm reg}$ given by: $u^s=\calT(u^{\rm reg})$. The last expression makes sense, since $\calT$ is an integral operator with compact support $K$ and hence it requires only that $u^{\rm reg}$ be defined in $\Omega$. If an incident field is given in $\bbR^p$, the regular field $u^{\rm reg}$ is the restriction of the incident field inside $\Omega$. If it is the regular field that is given inside $\Omega$, then it can be extended uniquely to $\bbR^p$ so as to satisfy $\calH_1 u^{\rm reg}=0$.
The regular field $u^{\rm reg}$ has an integral representation as described in the following result.
\begin{prop}
Let $u^{\rm reg}$ satisfy $\calH_1 u^{\rm reg}=0$ inside $\Omega$ and a given trace $\gamma^-(u^{\rm reg})$ on $\Gamma$. Then, there exists a unique density $\sigma^{\rm reg} \in H^{-1/2}(\Gamma)$ such that:
$$
u^{\rm reg}(\x)=\int_{\Gamma} \sigma^{\rm reg}(\x') g^+(\x-\x') ds(\x') ,\, \x \in \Omega.
$$ 
\end{prop}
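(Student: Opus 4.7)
The plan is to realize $\sigma^{\rm reg}$ as the preimage of the Dirichlet datum under the single layer boundary operator, and then to identify the resulting single layer potential with $u^{\rm reg}$ by uniqueness for the interior Dirichlet problem.

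First I would introduce the single layer potential $(S\sigma)(\x)=\int_\Gamma \sigma(\y)g^+(\x-\y)\,ds(\y)$ for $\x\in\bbR^p$ and its trace operator $S_\Gamma:H^{-1/2}(\Gamma)\to H^{1/2}(\Gamma)$, $S_\Gamma\sigma=\gamma^-(S\sigma)=\gamma^+(S\sigma)$ (the single layer has no jump across $\Gamma$). By classical mapping properties, $S\sigma$ is in $H^1_{\rm loc}$, solves $\calH_1(S\sigma)=0$ in $\Omega$ and in $\bbR^p\setminus\overline{\Omega}$, and $S_\Gamma$ is bounded. The candidate density is $\sigma^{\rm reg}:=S_\Gamma^{-1}\bigl(\gamma^-(u^{\rm reg})\bigr)$, so the argument reduces to proving that $S_\Gamma$ is invertible.

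The key step is to show that $S_\Gamma$ is a Fredholm operator of index $0$ with trivial kernel. Fredholmness of index $0$ follows by writing $S_\Gamma$ as the sum of the (isomorphism) boundary operator associated with the Laplace single layer and a compact perturbation, since $g^+$ differs from the static Green kernel by a smoother term. For injectivity, assume $S_\Gamma\sigma=0$. Then $u=S\sigma$ is a radiating solution of $\calH_1 u=0$ in $\bbR^p\setminus\overline{\Omega}$ with vanishing Dirichlet trace; by uniqueness for the exterior Dirichlet problem (the Sommerfeld radiation condition together with Rellich's lemma), $u\equiv 0$ outside $\overline{\Omega}$. The jump relations $[u]_\Gamma=0$, $[\partial_n u]_\Gamma=-\sigma$ then give $\gamma^-(u)=0$ and thus $u$ solves the interior homogeneous Dirichlet problem in $\Omega$; provided $k^2$ is not an eigenvalue of $-\Delta$ with Dirichlet boundary condition on $\Gamma$, this forces $u\equiv 0$ inside $\Omega$, and then $\sigma=-[\partial_n u]_\Gamma=0$. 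Hence $S_\Gamma$ is an isomorphism, so $\sigma^{\rm reg}$ exists and is unique in $H^{-1/2}(\Gamma)$.

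It remains to check that $u_{\sigma}(\x):=\int_\Gamma\sigma^{\rm reg}(\y)g^+(\x-\y)\,ds(\y)$ coincides with $u^{\rm reg}$ inside $\Omega$. By construction, $u_{\sigma}$ solves $\calH_1 u_\sigma=0$ in $\Omega$ and $\gamma^-(u_\sigma)=S_\Gamma\sigma^{\rm reg}=\gamma^-(u^{\rm reg})$; invoking once more uniqueness of the interior Dirichlet problem (outside the same discrete spectrum already noted in the preceding paragraph of the paper), $u_\sigma=u^{\rm reg}$ in $\Omega$, which is the desired representation.

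The main obstacle is the exceptional set of frequencies coinciding with the Dirichlet eigenvalues of $-\Delta$ on $\Omega$, at which both the injectivity of $S_\Gamma$ and the uniqueness of the interior Dirichlet problem fail. Since the statement tacitly lives outside this discrete set (as already acknowledged in the preamble to the proposition), I would mention this restriction explicitly; if one wished to remove it, the standard remedy is to replace $g^+$ by a combined single/double layer kernel in the spirit of Brakhage--Werner, which yields an invertible boundary operator for all real $k$ without changing the representation theorem.
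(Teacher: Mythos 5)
Your argument is essentially correct, but it takes a genuinely different route from the paper. The paper's proof is constructive: it extends $u^{\rm reg}$ to $\bbR^p\setminus\overline{\Omega}$ as the radiating solution of the exterior Dirichlet problem with matching trace, observes that the extension is continuous across $\Gamma$ while its normal derivative jumps, so that $\calH_1 u^{\rm reg}=[\partial_n u^{\rm reg}]_\Gamma\,\delta_\Gamma$ in the sense of distributions, and reads off the density as $\sigma^{\rm reg}=[\partial_n u^{\rm reg}]_\Gamma$. You instead go through the boundary operator $S_\Gamma$ and its invertibility (Fredholm of index $0$ plus injectivity). Both are standard, but the paper's route buys something concrete: it works for \emph{all} $k>0$, whereas yours degenerates exactly at the interior Dirichlet eigenvalues, where $S_\Gamma$ has a nontrivial kernel and the identification step via the interior Dirichlet problem fails. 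Note also that the uniqueness asserted in the proposition is injectivity of the map $\sigma\mapsto (S\sigma)|_\Omega$, not of $\sigma\mapsto S_\Gamma\sigma$; these are different at an eigenvalue. The paper's uniqueness argument exploits this: if $S\sigma$ vanishes throughout $\Omega$, its trace vanishes, so the exterior field vanishes by Rellich's lemma, and then $\sigma$ equals the (zero) jump of the normal derivative --- no eigenvalue hypothesis is needed. Since the proposition, unlike Proposition~\ref{scatint}, carries no eigenvalue restriction, your proof as written establishes a strictly weaker statement; you flag this, but your proposed Brakhage--Werner fix does not actually repair it, because replacing $g^+$ by a combined single/double layer kernel changes the representation itself and no longer yields a pure single layer with kernel $g^+$. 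If you want to keep your functional-analytic framing while covering the eigenvalue case, you would have to argue separately that $\gamma^-(u^{\rm reg})$ always lies in the range of $S_\Gamma$ (it does, by the compatibility conditions at an eigenvalue) and then select the preimage matching $u^{\rm reg}$ itself rather than merely its trace --- at which point the paper's direct jump construction is simpler.
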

\begin{proof}
Let us extend $u^{\rm reg}$ to $\bbR^3$ by imposing that it satisfies the following problem in $\bbR^3 \setminus \overline{\Omega}$:
\begin{eqnarray*}
\calH_1 u^{\rm reg}=0,\ \gamma^+(u^{\rm reg})=\gamma^-(u^{\rm reg}), \\
\partial_x u^{\rm reg}=ik u^{\rm reg} +o\left(\frac{1}{x}\right).
\end{eqnarray*}
The extended field is continuous through $\Gamma$ but its normal derivative is not. As a consequence, it holds:
$$
\calH_1 u^{\rm reg}=[\partial_n u^{\rm reg}]_{\Gamma} \delta_{\Gamma}.
$$
The proposition follows by defining $\sigma^{\rm reg}=[\partial_n u^{\rm reg}]_{\Gamma}$.
	The uniqueness follows from the following argument. Assume that there exists $\sigma^{\rm reg}$ such that: 
	$$
	\int_{\Gamma} \sigma^{\rm reg}(\x') g^+(\x-\x')ds(\x')=0,\, \forall \x \in \Omega.
	$$
	Define then $v(\x)=	\int_{\Gamma} \sigma^{\rm reg}(\x') g^+(\x-\x')ds(\x'),\, \forall \x \in \bbR^p$. The function $v$ is identically null inside $\Omega$. 
	Outside $\Omega$ it satisfies the following problem:
	$$
	\calH_1 v=0,\, \gamma^+(v)=0,
	$$
	and it satisfies a radiation condition. As a consequence, it is null by Rellich lemma, cf. \cite{cessenat}. Since $v$ satisfies $\calH_1 v=\sigma^{\rm reg} \delta_{\Gamma}$ in $\bbR^p$, we conclude that $\sigma^{\rm reg}=0$.	
\end{proof}
The purpose of this result is to stress that the incident field need only be defined locally, that is, inside $\Omega$. 
Our next result states that the scattered field can also be represented by an integral over $\Gamma$.
\begin{prop} \label{scatint}
Assume that $k^2$ is not an eigenvalue of $-\Delta$ inside $\Omega$ with the homogeneous Dirichlet condition on $\Gamma$. There exists a unique $\sigma^s \in H^{-1/2}(\Gamma)$ such that: 
\bq \label{integus}
u^s(\x)=\int_{\Gamma} \sigma^s(\x') g^+(\x-\x') ds(\x'), \x \in \bbR^p \setminus \Omega.
\eq
At infinity, the scattered field has the following asymptotic behavior:
$$
u^s(\x) =  \frac{-k}{4\pi}  \frac{e^{ikx}}{kx} \int_{\Gamma} \sigma^s(\x') e^{-ik \hx \cdot \x'}ds(\x')+\calO\left(\frac{1}{(kx)^2}\right).
$$
\end{prop}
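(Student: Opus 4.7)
The strategy is to construct $\sigma^s$ by inverting the single-layer boundary integral operator on $\Gamma$, then to use uniqueness of the exterior Dirichlet problem to identify the resulting layer potential with $u^s$, and finally to expand $g^+$ at infinity to extract the radiation pattern.

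First, I would introduce the boundary single-layer operator $S : H^{-1/2}(\Gamma) \to H^{1/2}(\Gamma)$, defined by $(S\sigma)(\x) = \int_{\Gamma} \sigma(\x') g^+(\x - \x') ds(\x')$ for $\x \in \Gamma$, and observe that $\gamma^+(u^s) \in H^{1/2}(\Gamma)$ is well defined since $\calH_1 u^s = 0$ on $\bbR^p \setminus \overline{K}$. If $\sigma^s \in H^{-1/2}(\Gamma)$ solves $S \sigma^s = \gamma^+(u^s)$, then the single-layer potential
\bq
v(\x) = \int_{\Gamma} \sigma^s(\x') g^+(\x - \x') ds(\x')
\eq
satisfies $\calH_1 v = 0$ on $\bbR^p \setminus \overline{\Omega}$, the radiation condition at infinity, and $\gamma^+(v) = S \sigma^s = \gamma^+(u^s)$. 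Exterior uniqueness for the Helmholtz Dirichlet problem \cite{cessenat,colton} then forces $v = u^s$ on $\bbR^p \setminus \overline{\Omega}$, which is \eqref{integus}.

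Second, I would establish that $S$ is an isomorphism under the hypothesis on $k^2$. Writing $S = S_0 + (S - S_0)$, where $S_0$ is the Laplace single-layer operator, $S_0$ is a positive-definite isomorphism from $H^{-1/2}(\Gamma)$ onto $H^{1/2}(\Gamma)$ and $S - S_0$ is compact (the difference $g^+ - g_0$ has a weaker singularity), so $S$ is Fredholm of index zero. It therefore suffices to show injectivity. If $S \sigma = 0$, the single-layer potential $w$ with density $\sigma$ is continuous across $\Gamma$ with $\gamma^\pm(w) = 0$, satisfies the Helmholtz equation on both sides of $\Gamma$, and satisfies the radiation condition outside. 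Rellich's lemma, as in the previous proposition, gives $w \equiv 0$ on $\bbR^p \setminus \overline{\Omega}$; the non-resonance hypothesis on $k^2$ gives $w \equiv 0$ on $\Omega$; hence $\sigma = [\partial_n w]_\Gamma = 0$. Uniqueness of $\sigma^s$ in \eqref{integus} is then immediate from injectivity of $S$.

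For the far-field asymptotic, I would substitute the expansions $|\x - \x'| = x - \hx \cdot \x' + \calO(x^{-1})$ and $|\x - \x'|^{-1} = x^{-1} + \calO(x^{-2})$, uniform for $\x' \in \Gamma$, into $g^+(\x - \x') = -e^{i k |\x - \x'|}/(4\pi |\x - \x'|)$ to obtain
\bq
g^+(\x - \x') = -\frac{1}{4\pi} \frac{e^{i k x}}{x} e^{-i k \hx \cdot \x'} + \calO\left(\frac{1}{x^2}\right),
\eq
and pair this uniform expansion with $\sigma^s \in H^{-1/2}(\Gamma)$, rearranging the leading factor as $-k/(4\pi) \cdot e^{ikx}/(kx)$ to match the stated form. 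The main obstacle is the invertibility of $S$: the spectral assumption on $k^2$ is essential, since at Dirichlet eigenvalues of $-\Delta$ in $\Omega$ the operator $S$ develops a non-trivial kernel (the spurious interior resonances familiar from the direct single-layer formulation), and the representation \eqref{integus} then fails to hold for arbitrary exterior traces.
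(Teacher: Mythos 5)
Your proof is correct, but it takes a genuinely different route from the paper's. The paper constructs the density directly: it extends $u^s$ into $\Omega$ by solving the interior Dirichlet problem $\calH_1 \tilde{u}^s=0$, $\gamma^-(\tilde{u}^s)=\gamma^+(u^s)$ (this is where the non-resonance hypothesis enters for existence), observes that the extended field satisfies $\calH_1 \tilde{u}^s=[\partial_n \tilde{u}^s]_{\Gamma}\,\delta_{\Gamma}$ in the distributional sense, and reads off $\sigma^s=[\partial_n \tilde{u}^s]_{\Gamma}$ by convolving with $g^+$; no boundary integral operator is ever inverted, and no Fredholm theory is needed. You instead go through the classical boundary-integral-equation machinery: solvability of $S\sigma^s=\gamma^+(u^s)$ via the Fredholm alternative ($S=S_0+\text{compact}$, injectivity from Rellich plus the non-resonance hypothesis), then identification with $u^s$ by exterior uniqueness. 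Your route is heavier but buys something the paper's does not: it establishes well-posedness of exactly the integral equation $\int_{\Gamma}\sigma^s(\x')g^+(\x-\x')\,ds(\x')=u^s(\x)$, $\x\in\Gamma$, that is discretized and solved numerically later in the paper, and it makes explicit that the spectral assumption is needed for existence (via injectivity of $S$) and not only for uniqueness. The paper's route is shorter and gives the density an explicit physical meaning as a jump of normal derivative. The uniqueness arguments are essentially identical in both (a vanishing exterior potential has zero interior trace, hence vanishes inside by non-resonance, hence has zero jump). One small caveat on your Fredholm step: the claim that the Laplace single-layer operator $S_0$ is a positive-definite isomorphism is clean for $p=3$ but requires a standard adjustment for $p=2$ (the logarithmic kernel can degenerate for curves of capacity one); since the paper states its proofs for $p=3$ only, this does not affect the argument as written.
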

\begin{proof}
Consider the field $\tilde{u}^s$ that is equal to $u^s$ outside $\Omega$ and that satisfies the following problem inside $\Omega$:
$$
\calH_1 \tilde{u}^s=0 , \, \gamma^-(\tilde{u}^s)=\gamma^+( u^s ).
$$
Over $\bbR^p$, it satisfies: $\calH_1 \tilde{u}^s=[\partial_n \tilde{u}^s]_{\Gamma} \delta_{\Gamma}.$
Given the outgoing wave condition at infinity, this gives:
$$
u^s(\x)=\int_{\Gamma} [\partial_n \tilde{u}^s]_{\Gamma} \, g^+(\x-\x') ds(\x'),
$$
and consequently the existence of the density $\sigma^s=[\partial_n \tilde{u}^s]_{\Gamma} $ belonging to $H^{-1/2}(\Gamma)$.
The asymptotic form is obtained by using the asymptotic form of the Green function:
\bq \label{green}
g^{\pm}(\x-\x') =_{x \to \infty}   -\frac{e^{\pm i k x}}{4\pi x} e^{\mp ik \x\cdot\x'}+\calO\left(\frac{x'^2}{x^2}\right),
\eq 
The unicity proceeds from the following argument. Assume that there is a $\sigma^s$ such that:$$\int_{\Gamma} [\partial_n \tilde{u}^s]_{\Gamma} g^+(\x-\x') ds(\x')=0,\forall \x \in \bbR^p\setminus \Omega.$$ Then define: 
$$
v(\x)=\int_{\Gamma} \sigma^s(\x') g^+(\x-\x') d\x',\forall \x \in \bbR^p.
$$
The function $v$ satisfies:
$$
\calH_1 v=0, \hbox{ in } \Omega, \, \gamma^-(v)=0.
$$
Consequently, it is null iff $k^2$ is not an eigenvalue of $-\Delta$ with the homogeneous Dirichlet condition on $\Gamma$.
\end{proof}
}
\textcolor{black}{
At this stage, one could wonder about the possibility of representing the incident field as an integral over $\Gamma$ that would be valid outside $\Omega$. For an incident field satisfying the Helmholtz equation in all $\bbR^p$, we cannot impose an outgoing wave condition and the kernel of the integral representation should be regular. This suggests that a natural integral representation should be in the following form:
\bq \label{uincint}
u^{\rm inc}(\x)=\int_{\Gamma} \sigma^{\rm inc}(\x') \Im(g^+(\x-\x')) ds(\x'), \, \x \in \bbR^p.
\eq
It turns out that this representation imposes strong constraints on the regularity of the field $u^{\rm inc}$. This is somewhat similar to the possibility of splitting a regular field defined over $\bbR^p$ into an incoming and an outgoing field. The possibility of this decomposition was analyzed in \cite{martin}, where it was shown that this was possible provided that the incident field were of the following form:
\bq \label{wavepack}
u^{\rm inc}(\x)=\int_{\bbR} A(\k) e^{i\k\cdot \x} d\k,
\eq
where $A(\k)$ is twice continuously differentiable.
If the field $u^{\rm inc}$ admits the integral representation (\ref{uincint}) over $\Gamma$, it can be decomposed as the sum of two pointwise converging series in the form:
\bq \label{2series}
	u^{\rm inc}(\x)=\sum_{nm} \frac{1}{2} i_{nm} h_n^{(1)}(kx)Y_n^m(\hx)+ \sum_{nm} \frac{1}{2} i_{nm} h_n^{(2)}(kx) Y_n^m(\hx),
\eq
as can be infered by noting that $\Im(g^+(\x-\x'))$ is the sum of two functions satisfying an outgoing wave condition and an incoming wave condition respectively. Therefore, it has also such a decomposition at infinity. However, the converse if false. That is, the fact that the field admits a decomposition at infinity does not imply that the series in (\ref{2series}) be (pointwise) convergent. In fact, the condition is much more stringent than the wavepacket representation in (\ref{wavepack}). Indeed, it is not fulfilled for a general wavepacket even with a very regular (e.g. analytic) spectrum $A(\k)$. Needless to say, it is false for a plane wave.
}
\textcolor{black}{
\subsection{The field scattered by a subset of the set of scatterers}
We consider the situation depicted in fig. \ref{subsetg}: a subset of the set of scatterers is contained in a bounded domain $\Omega_0$, with boundary $\Gamma^-$. Further more we consider an annular region $A$ whose boundaries are $\Gamma^-$ and $\Gamma^+$ (see fig. \ref{subsetg}). We denote $\Omega_1=\Omega_0 \cup A$.
From the point of view of $\Omega_0$, there are two incident fields: the "true" incident field $u^{\rm inc}$ and the "local" incident field, which is the field diffracted by the scatterers that are not contained in $\Omega_0$. Inside $A$, the total field $u$ can be decomposed into a local regular field $u^{\rm reg}_{\rm loc}$, which is the sum of the true incident field and the local incident field, and a local scattered field $u^s_{\rm loc}=u-u^{\rm reg}_{\rm loc}$, where $u$ is the total field.
These fields can be represented by integrals supported by $\Gamma^{\pm}$. 
}
\begin{figure}[h!]
	\begin{center}
		\includegraphics[width=8cm]{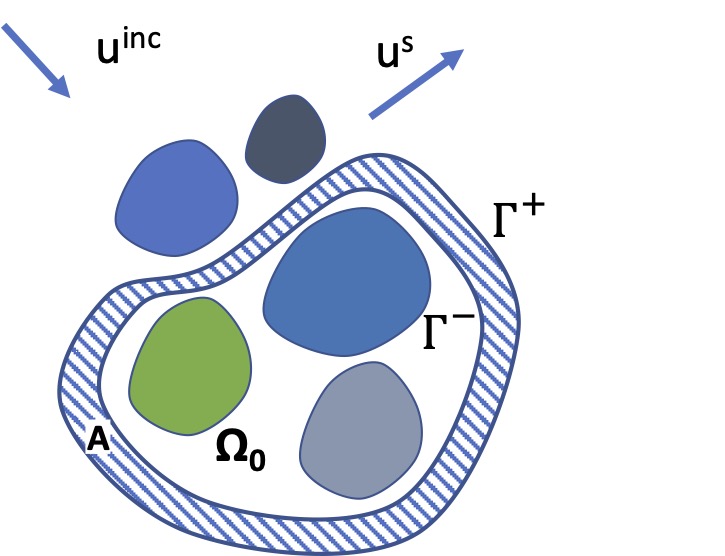}
		\caption{A subset of scatterers enclosed by $\Gamma^-$ is considered. \label{subsetg}}
	\end{center}
\end{figure}

\textcolor{black}{
\begin{prop}\label{densitygam}
	The total field $u$ can be written in the form
	\bq 
	u(\x)=u^s_{\rm loc}(\x)+u^{\rm reg}_{\rm loc}(\x),\, \x \in A,
	\eq
	where $u^s_{\rm loc}$ satisfies $\calH_1 u^s_{\rm loc}=0$ in $\bbR^p \setminus \Omega_0$ and a radiation condition and $u^{\rm reg}_{\rm loc}$ satisfies $\calH_1 u^{\rm reg}_{\rm loc}=0$ in $\Omega$.
	Moreover, there exist  $(\sigma^s,\sigma^{\rm reg})$ in $H^{-1/2}(\Gamma^-)\times H^{-1/2}(\Gamma^+)$ such that:
	\bq \label{decomp1}
	u^s_{\rm loc}(\x)=\int_{\Gamma^-} \sigma^s(\x') g^+(\x-\x')\, ds(\x'),
	\eq
	and
	\bq \label{decomp2}
	 u^{\rm reg}_{\rm loc}(\x)=\int_{\Gamma^+} \sigma^{\rm reg}(\x') g^+(\x-\x')\, ds(\x').
	\eq
\end{prop}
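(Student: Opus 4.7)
The plan is to reduce the proposition to two applications of results already proved in the excerpt, by exploiting the partition of the scatterers into those contained in $\Omega_0$ and those lying outside $\Omega_1$, and then invoking the single-layer representations of the scattered and regular fields respectively.

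First I would construct the decomposition $u=u^s_{\rm loc}+u^{\rm reg}_{\rm loc}$ in $A$ in a self-consistent way. Write the global potential as $a=1+(a_0-1)+(a_1-1)$, where $a_0-1$ is supported in $\Omega_0$ and $a_1-1$ is supported outside $\Omega_1$. Applying Lemma \ref{scatampl} to the sub-problem associated with $a_0$, I would define $u^s_{\rm loc}=\calT_0(u^{\rm reg}_{\rm loc})$, where $u^{\rm reg}_{\rm loc}=u-u^s_{\rm loc}$ plays the role of the effective incident field on the $\Omega_0$-subset. By construction $u^s_{\rm loc}$ is an outgoing solution of $\calH_1 u^s_{\rm loc}=0$ in $\bbR^p\setminus\Omega_0$ (its only singular support is in the compact set inside $\Omega_0$) and satisfies the Sommerfeld radiation condition. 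Dually, $u^{\rm reg}_{\rm loc}$ is the superposition of $u^{\rm inc}$ and the fields scattered by the exterior scatterers, whose singular support lies outside $\Omega_1$; hence $\calH_1 u^{\rm reg}_{\rm loc}=0$ throughout $\Omega_1$. The existence of this coupled decomposition follows from the same Neumann-series/resolvent argument used in Lemma~\ref{scatampl}, now written for the two disjoint subgroups.

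Second, I would obtain \eqref{decomp1} by applying Proposition~\ref{scatint} to $u^s_{\rm loc}$ with $\Omega=\Omega_0$ and $\Gamma=\Gamma^-$: the proof there—extending $u^s_{\rm loc}$ inside $\Omega_0$ as the unique solution of the interior Dirichlet problem $\calH_1\tilde u^s_{\rm loc}=0$ with $\gamma^-(\tilde u^s_{\rm loc})=\gamma^+(u^s_{\rm loc})$, then reading off $\sigma^s=[\partial_n\tilde u^s_{\rm loc}]_{\Gamma^-}\in H^{-1/2}(\Gamma^-)$ from the distributional equation $\calH_1\tilde u^s_{\rm loc}=\sigma^s\delta_{\Gamma^-}$—transposes verbatim (modulo the generic spectral assumption on $\Omega_0$). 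For \eqref{decomp2} I would apply the preceding proposition on the single-layer representation of a regular field, now with $\Omega=\Omega_1$ and $\Gamma=\Gamma^+$. Since $u^{\rm reg}_{\rm loc}$ solves $\calH_1u^{\rm reg}_{\rm loc}=0$ in $\Omega_1$, extending it outside $\Omega_1$ as the outgoing Helmholtz solution with matching Dirichlet trace on $\Gamma^+$ and isolating the source $\sigma^{\rm reg}\delta_{\Gamma^+}=[\partial_n u^{\rm reg}_{\rm loc}]_{\Gamma^+}\delta_{\Gamma^+}$ produces the required density in $H^{-1/2}(\Gamma^+)$.

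The main obstacle is not the integral representations themselves, which are essentially mechanical once one is in the setting of Proposition~\ref{scatint} and its regular-field analogue. The subtlety lies in the clean splitting of the global multiple-scattering solution $u$ into $u^s_{\rm loc}+u^{\rm reg}_{\rm loc}$ with both pieces Helmholtz-harmonic in the right regions. One has to verify that the self-referential definition above really does correspond to the unique $u$ of Lemma~\ref{scatampl}, and that $u^{\rm reg}_{\rm loc}$ is free of singular support throughout $\Omega_1$—this is where the geometric hypothesis that the remaining scatterers lie outside $\Omega_1$ becomes essential.
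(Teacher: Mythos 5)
Your proof is correct, but it takes a genuinely different route from the paper's. The paper works with the annulus directly: it restricts $u$ to $A$, extends that restriction by Dirichlet continuity into $\Omega_0$ and outward from $\Gamma^+$ (with an outgoing condition), and reads \emph{both} densities off at once from the single distributional identity $(\Delta+k^2)\tilde u=[\partial_n\tilde u]_{\Gamma^+}\delta_{\Gamma^+}-[\partial_n\tilde u]_{\Gamma^-}\delta_{\Gamma^-}$; the decomposition $u=u^s_{\rm loc}+u^{\rm reg}_{\rm loc}$ is then \emph{defined} by the two resulting single layers, and the stated PDE/radiation properties follow from potential theory. You instead build the decomposition physically first, by splitting the potential into the part supported in $\Omega_0$ and the part supported outside $\Omega_1$ (so that $u^s_{\rm loc}=\calG_1\calV_0 u$ is outgoing and Helmholtz-harmonic off $\Omega_0$, while $u^{\rm reg}_{\rm loc}=u^{\rm inc}+\calG_1\calV_1 u$ is Helmholtz-harmonic in $\Omega_1$), and only then invoke proposition (\ref{scatint}) on $\Gamma^-$ and its regular-field analogue on $\Gamma^+$ to get (\ref{decomp1}) and (\ref{decomp2}) separately. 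Your route costs a little more: you must justify the splitting (easiest done directly from the known $u$ of lemma (\ref{scatampl}) rather than by a new self-consistent Neumann series, which is what your ``main obstacle'' paragraph is circling around), and you inherit the spectral caveat of proposition (\ref{scatint}) for uniqueness of $\sigma^s$, which the paper's statement does not even assert here. In exchange, your two pieces carry an immediate physical interpretation (field radiated by the interior scatterers versus field incident from outside), which makes the subsequent corollary on $\calT_{\rm loc}$ essentially automatic; note also that the two constructions necessarily agree in $A$, since a function that is simultaneously outgoing-Helmholtz in $\bbR^p\setminus\Omega_0$ and Helmholtz in $\Omega_1$ is entire and outgoing, hence zero by Rellich's lemma.
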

\begin{proof}
	Consider the restriction $\left. u \right|_A$ of $u$ to $A$ and extend $\left. u \right|_A$ by continuity across the boundaries $\Gamma^+$ and $\Gamma^-$, in such a way that the extended function $\tilde{u}$ satisfies: $$\Delta \tilde{u}+k^2 \tilde{u}=0 \hbox{ in }\Omega_0 \cup A \cup (\bbR^3 \setminus \bar{A}),$$
	with an outgoing wave condition at infinity. Then, function $\tilde{u}$ satisfies the following equation in the distributional meaning: $$(\Delta+k^2) \tilde{u}=\left[ \frac{\partial \tilde{u}}{\partial n}\right]_{\Gamma^+}\delta_{\Gamma^+}-\left[ \frac{\partial \tilde{u}}{\partial n}\right]_{\Gamma^-}\delta_{\Gamma^-},$$
	in $\bbR^p$. Therefore, since $\tilde{u}=u$ in $A$, the following representation of $u$ inside $A$ is obtained: $$u(\x)=\int_{\Gamma^+} \sigma^{\rm reg}(\x') g^+(\x-\x') ds(\x')+\int_{\Gamma^-} \sigma^s(\x') g^+(\x-\x') ds(\x'),$$ with ${\sigma^{\rm reg}= \left[ \frac{\partial u}{\partial n}\right]_{\Gamma^{+}}}$ and 
	${\sigma^{s}= -\left[ \frac{\partial u}{\partial n}\right]_{\Gamma^{-}}}$.
	The properties of $u^{s}_{\rm loc}$ and $u^{\rm reg}_{\rm loc}$ follow from potential theory.
	\end {proof}
}
%
\textcolor{black}{
Since a decomposition of the field into a regular field and a scattered field was obtained, it is possible to use the scattering theory result exposed in lemma (\ref{scatampl}) to obtain the following:
\begin{cor}\label{fundlemma}
	There exists a linear operator $\calT_{\rm loc}$ such that:
	$u^s_{\rm loc}=\calT_{\rm loc}(u^{\rm reg}_{\rm loc}) $.
	This operator induces an operator $\calT_{\rm loc}^{\Gamma^-}$ supported by $\Gamma^-$ and such that:
	$$
	H^{1/2}(\Gamma^-)\ni\gamma( u^{\rm reg}_{\rm loc}) \xrightarrow{\calT_{\rm loc}^{\Gamma^-}} \sigma^s \in H^{-1/2}(\Gamma^-).
	$$
\end{cor}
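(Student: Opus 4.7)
The plan is a direct transfer of Lemma \ref{scatampl} and Proposition \ref{scatint} to the local sub-problem defined by the scatterers lying inside $\Omega_0$. The first step is to note that $u^{\rm reg}_{\rm loc}$ qualifies as an incident field for this sub-problem: its single-layer representation (\ref{decomp2}) shows that it extends from $A$ to all of $\Omega_1$, and in particular to $\Omega_0$, as a solution of $\calH_1 u^{\rm reg}_{\rm loc}=0$. Since the restriction of the potential $a-1$ to $\Omega_0$ still has compact support in $\Omega_0$, Lemma \ref{scatampl} applies verbatim to the scattering problem inside $\Omega_0$ driven by $u^{\rm reg}_{\rm loc}$, and produces a unique $u^s_{\rm loc}$ together with the linear operator $\calT_{\rm loc}$ satisfying $u^s_{\rm loc}=\calT_{\rm loc}(u^{\rm reg}_{\rm loc})$.

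To promote $\calT_{\rm loc}$ to a surface operator, I would compose three steps. First, to a trace $\phi=\gamma(u^{\rm reg}_{\rm loc})\in H^{1/2}(\Gamma^-)$ associate the unique field $w$ in $\Omega_0$ solving $\calH_1 w=0$ in $\Omega_0$ with $\gamma^-(w)=\phi$; this is the usual interior Dirichlet problem, and its solution coincides with the unique analytic continuation guaranteed by the representation (\ref{decomp2}). Second, feed $w$ into $\calT_{\rm loc}$, which is legitimate because the underlying integral operator $(1-\calG_1 \calV)^{-1} \calG_1 \calV$ has a kernel supported in $K\subset\Omega_0$ and therefore needs only values of the incident field inside $\Omega_0$. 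Third, apply Proposition \ref{scatint} to the resulting $u^s_{\rm loc}$ to extract its single-layer density $\sigma^s\in H^{-1/2}(\Gamma^-)$. The composition of these three linear maps is, by definition, $\calT_{\rm loc}^{\Gamma^-}$.

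The main obstacle is the spectral caveat already visible in Proposition \ref{scatint}: both the trace-to-field reconstruction in step one and the field-to-density representation in step three become singular when $k^2$ coincides with an eigenvalue of $-\Delta$ on $\Omega_0$ with homogeneous Dirichlet boundary condition. Away from this discrete set, each of the three intermediate maps is bounded between the appropriate Sobolev spaces, by standard single-layer trace theorems together with the well-posedness of the volume scattering problem, so that $\calT_{\rm loc}^{\Gamma^-}:H^{1/2}(\Gamma^-)\to H^{-1/2}(\Gamma^-)$ is well-defined, linear and continuous. A Brakhage--Werner type combination of single- and double-layer representations would remove the eigenvalue restriction altogether, but this refinement is not needed for the corollary as stated.
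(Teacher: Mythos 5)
Your proposal is correct and follows essentially the same route as the paper: $\calT_{\rm loc}$ comes from applying Lemma \ref{scatampl} to the local scattering problem driven by $u^{\rm reg}_{\rm loc}$, and $\calT_{\rm loc}^{\Gamma^-}$ is obtained by composing the trace-to-field lift with $\calT_{\rm loc}$ and the (left-invertible) single-layer density extraction on $\Gamma^-$. You are in fact more explicit than the paper's one-line proof, notably in flagging the Dirichlet-eigenvalue caveat on $\Omega_0$, which the paper only states at the level of Proposition \ref{scatint}.
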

\begin{proof}
	It remains to prove the existence of $\calT_{\rm loc}^{\Gamma^-}$. Given the relation $u^s_{\rm loc}=\calT_{\rm loc} u^{\rm reg}_{\rm loc}$, $\sigma^s$ is defined in proposition (\ref{densitygam}) and $\calT_{\rm loc}^{\Gamma^-}$ by the left invertibility of $\sigma^s \to u^s$ as defined by the integral in (\ref{integus})
\end{proof}
This shows that each subset can be characterized by an operator $\calT_{\rm loc}$ independently of the other scatterers. This is the basic result of the extension of the Fast Multipole Method that we propose. Instead of representing the field scattered by a subset as a series of spherical harmonics outside a sphere enclosing the scatterers, the scattered field is represented by an integral over some surface enclosing the subset. The interaction with the other subsets is then handled by using this integral representation directly, without resorting to a multipole expansion, as explained in detail in the following subsection.
\subsection{Extension of the Fast Multipole Method: The Fast Monopole Method}
It is important to remark that the single layer representation involves a surface $\Gamma$, enclosing the scatterers, that can be chosen at will. By this we mean that, given a set of scatterers and any smooth enough surface $\Gamma$ enclosing this set, the field scattered by this set  and the incident field can be represented by an integral over $\Gamma$. As already said, this result is in fact a generalization of the series expansion over spherical harmonics and spherical Bessel functions to an arbitrary surface. As a consequence, it is possible to split a given set of $N$ scatterers into several subsets containing $N_j$ scatterers, apply multiple scattering theory to each smaller subset, then use the representation by the single layers to couple the subsets in between them. The gain lies in the fact that the interaction between the $N$ scatterers is not handled by direct coupling, which results in $N^2$ operations, but instead through the boundary layers (\ref{decomp1}) which are discretized and represented in the form: $\sum_{p=1}^{P_j} \sigma_p^s H_0(k|\x-\x_p|)$, where the number $P_j$ is much smaller than $N_j$ (see the numerical applications for specific values, notably table \ref{tablevaleur}).
We consider the example described in fig. \ref{subset}. It consists of a decomposition of the set in fig. \ref{scatprob} into two subsets $O^1$ and $O^2$.
\begin{figure}[h!]
	\begin{center}
		\includegraphics[width=8cm]{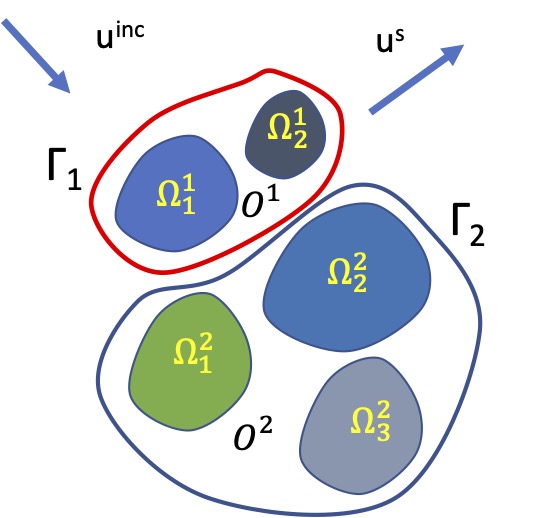}
		\caption{Sketch of the scattering problem with two subsets. \label{subset}}
	\end{center}
\end{figure}
The incident field is $u^{\rm inc}$. For each scatterer $\Omega^{1}_j \in O^{1}$, the local incident field is the sum of the "true" incident field $u^{\rm inc}(\x^1_j)$, the fields $u^{s,1,\rm loc}_k$ scattered by the other scatterers $\Omega^1_{k}, \, k \neq j$,  and the field coming from the other subset $O^2$ and given by the single layer representation:
\bq \label{sgllayer}
u^{s,2}(\x^1_j)=\int_{\Gamma_2} \sigma^{s,2}(\x') H_0^{(1)}(k|\x^1_j-\x'|)ds(\x').
\eq
Of course, there is the same set of relations obtained by making the exchange $1 \leftrightarrow 2$.
The local incident field is therefore:
\bq
u^{\rm inc,loc}(\x^1_j)=u^{\rm inc}(\x^{1}_j)+u^{s,2}(\x^1_j)+\sum_{k\neq j} u^{s,1,\rm loc}_k(\x^1_j).
\eq
Using corollary (\ref{fundlemma}), we can infer the existence of two sets of operators $\calT^1_j,\, \calT^2_j$ such that:
\bq
u^{s,\rm loc,1}_j(\x^1_j)=\calT^1_j (u^{\rm inc,loc}(\x^1_j)),\, u^{s,\rm loc,2}_j(\x^2_j)=\calT^2_j (u^{\rm inc,loc}(\x^2_j)) .
\eq
The system can now be closed by relating the local scattered fields $u^{s,\rm loc,1}$ and $u^{s,\rm loc,2}$ to $u^{s,1}$ and $u^{s,2}$ respectively and then to the densities $\sigma^1$ and $\sigma^2$. This is done first by summing the local scattered fields in each subset:
\bq
u^{s,1}(\x^1)=\sum_j u^{s,\rm loc,1}_j(\x^1_j),\, u^{s,2}(\x^2)=\sum_j u^{s,\rm loc,2}_j(\x^2_j),
\eq
then by solving the integral equation (\ref{decomp1}), which gives $\sigma^{s,1}$ and $\sigma^{s,2}$.
It should be noted that in order to put the algorithm into practice, an iterative algorithm is to be used, so as to avoid inverting a large matrix. Otherwise, the gain in memory would be lost, although the number of operations would still be reduced.
}
\section{Discussion and numerical examples}
\subsection{Representation of the field on the boundary of an inhomogeneous scatterer}

Let us consider the scattering of an electromagnetic plane wave in $E_{||}$ polarization by a collection of small cylinders contained in a domain whose cross section $\Omega$  is bounded by an curve $\Gamma$ (cf. fig.~\ref{astroid}). 
\begin{figure}[h!]
	\begin{center}
		\includegraphics[width=10cm]{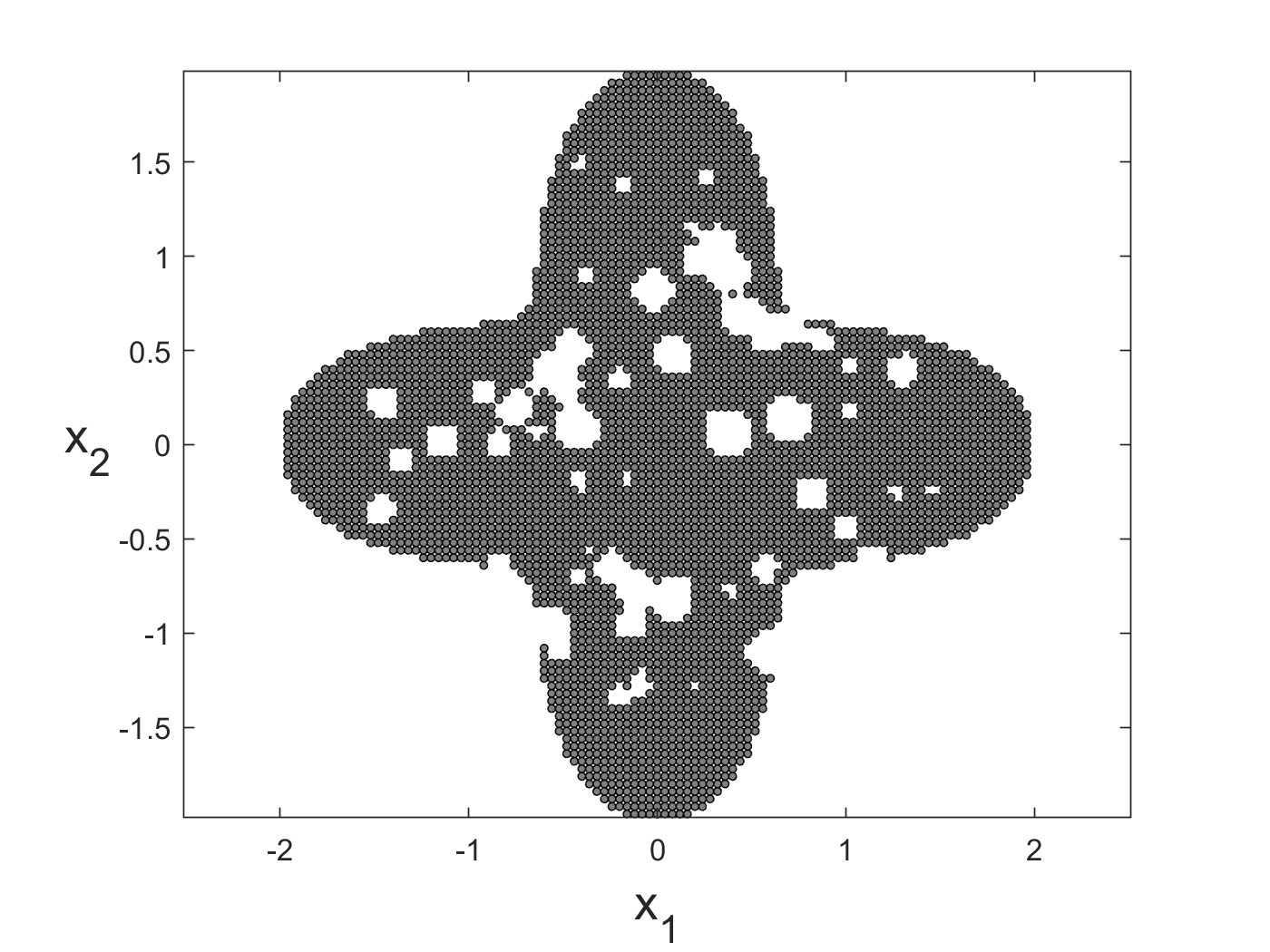}
		\caption{One quarter of the domain $\Omega$: it is a smooth trefoil filled with small dielectic rods. Holes have been created by randomly removing rods. \label{astroid}}
	\end{center}
\end{figure}
The domain $\Omega$ is not entirely filled with cylinders: "holes" have been created in $\Omega$ by removing cylinders.

For numerical purpose, the unit of length is defined by the wavelength and we choose $\lambda=1$. The domain $\Omega$ contains $N=4657$ rods with relative permittivity 12. We use the multiple scattering approach described in \cite{moijosa}. 

We assume that the rods, at positions $(\x_k)$, are small enough that they can each be characterized by only one scattering coefficient $s^0_q$ \cite{moijosa}. We denote $\x=(x_1,x_2)$.
\begin{figure}[h!]
	\begin{center}
		\includegraphics[width=10cm]{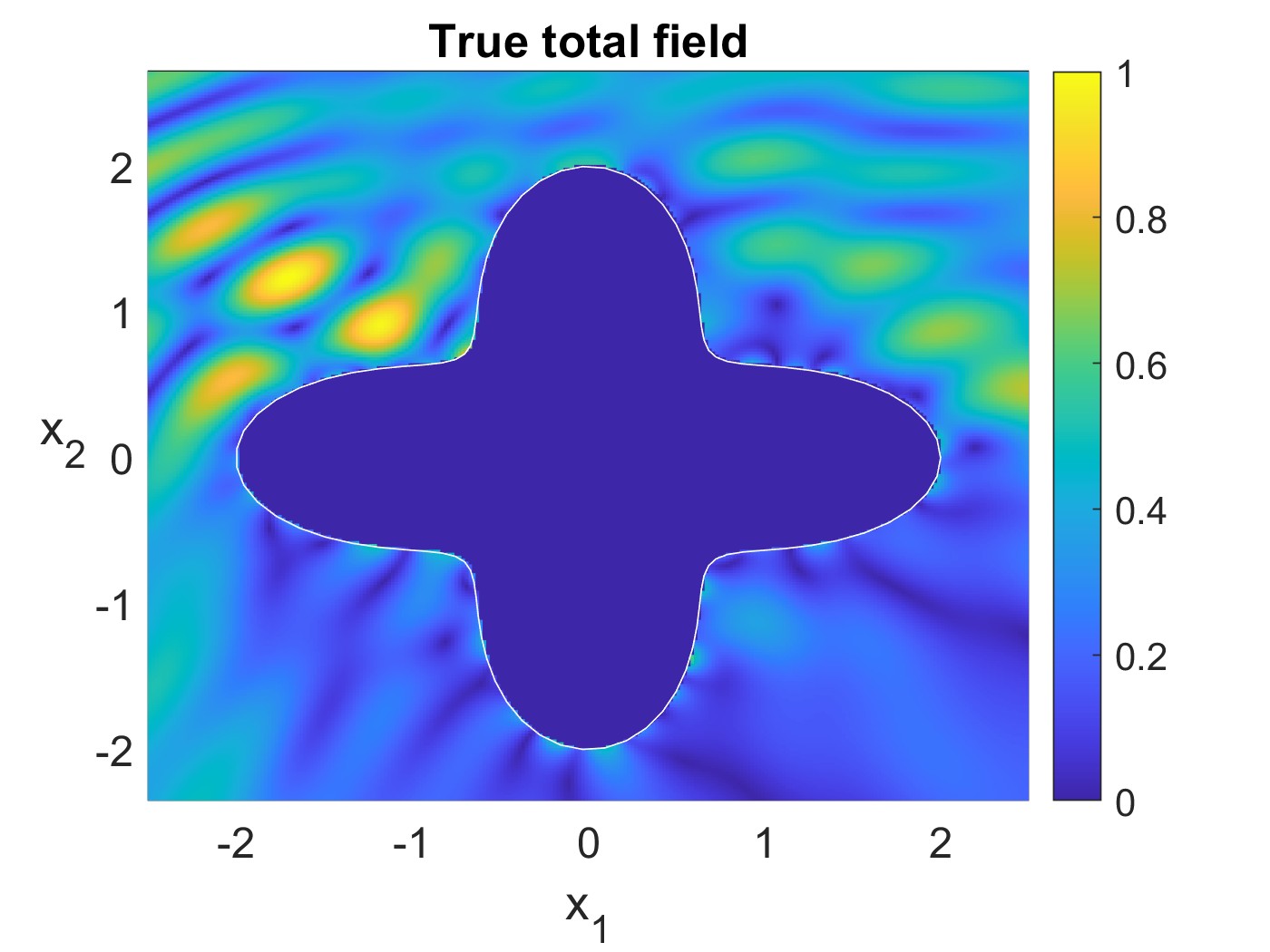}
		\caption{Map of the normalized modulus of the total field outside the trefoil, computed by a direct summation over all the scatterers contained inside $\Omega$.\label{true}}
	\end{center}
	\begin{center}
		\includegraphics[width=10cm]{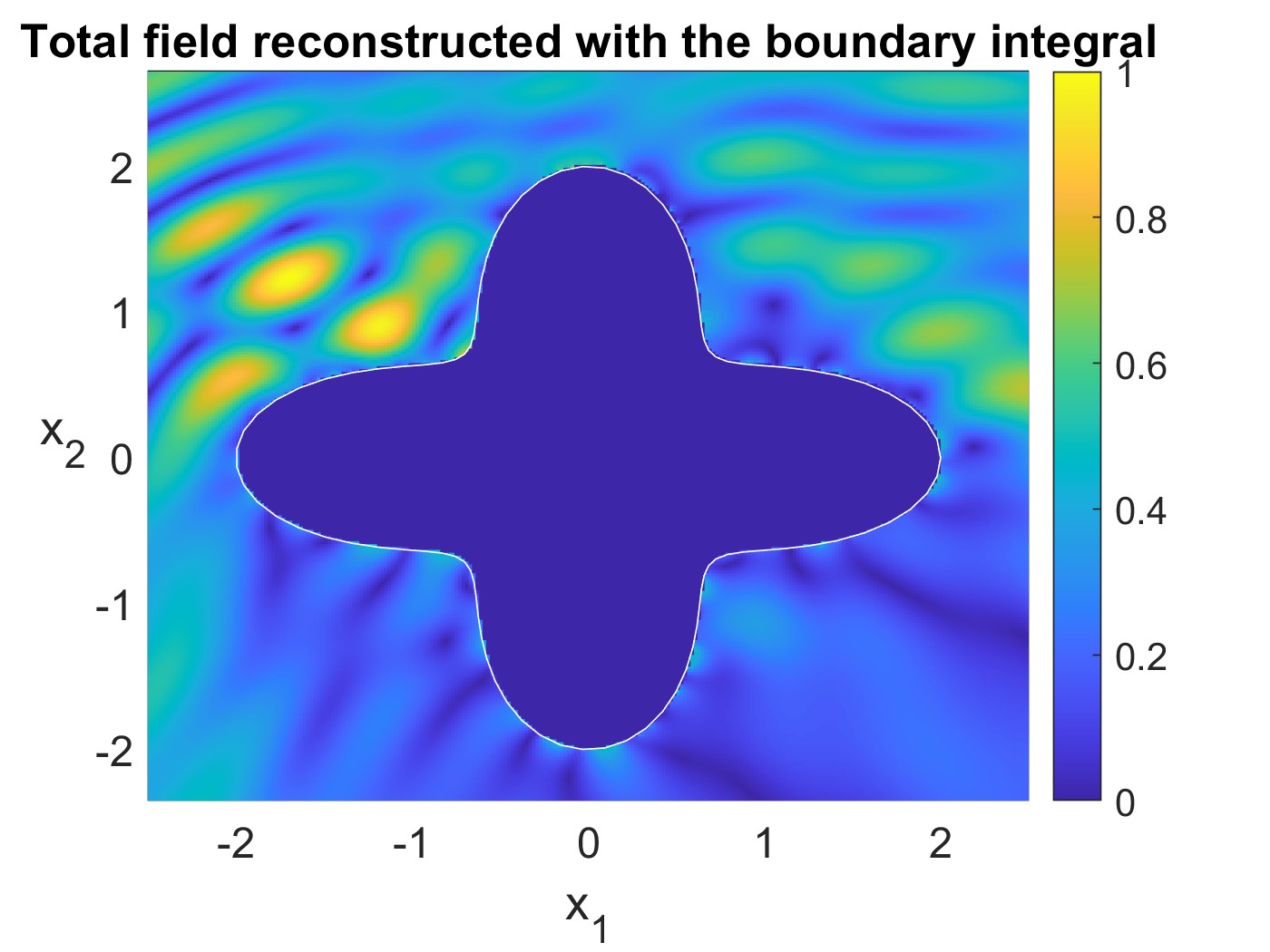}
		\caption{Map of the normalized modulus of the total field outside the trefoil, computed by using the single layer representation of the scattered field. \label{reco}}
	\end{center}
\end{figure}
At this step the scattered field is represented, everywhere outside the cylinders, by a sum over the rods in the form: 
\bq \label{MST}
u^s(\x)=\sum_{q=1}^N s^0_q  H^{(1)}_0(k|\x-\x_q|).\eq
 The coefficients $\hs=(s^0_q)_{1,\hdots, N}$ are determined from the multiple scattering theory exposed in \cite{moijosa}, see also \cite{foldy,lax} for historical references on this approach.
The scattering coefficient $s^{0}_q$ is related to the local incident field $u^{\rm inc,loc}_q$ through the scattering amplitude $t^{0}_q$: $s^{0}_q=t^{0}_q u^{\rm inc,loc}_q$. For a circular dielectric rod of radius $r_q$ and relative permittivity $\varepsilon_q=\nu_q^2$, it holds: 
$$
t^0_q=-\frac{J_1(kr_q)J_0(k\nu_q r_q)-\nu_q J_0(kr_q) J_1(k\nu_q r_q)}
    {H^{(1)}_1(kr_q) J_0(k\nu_q r_q)-\nu_q H^{(1)}_0(kr_q) J_1(k\nu_q r_q)}.
$$
 The local incident field is:
\bq
u^{\rm inc,loc}_q=u^{\rm inc}(\x_q)+\sum_{j \neq q} s^0_j H_0^{(1)}(|\x_q-\x_j|).
\eq
Therefore, it holds:
\bq
s^0_q=t^0_q \left(u^{\rm inc}(\x_q)+ \sum_{j \neq q} s^0_j H_0^{(1)}(|\x_q-\x_j|)\right).
\eq
Let us denote $\dbar{t}$ the diagonal matrix defined by $\dbar{t}=diag(t^0_1,\ldots,t^0_N)$ and $\dbar{h}$ the matrix with entries $h_{ij}=H_0^{(1)}(|\x_i-\x_j|)$ for $i \neq j$ and $h_{ii}=0$. The coefficients $\hs$ are obtained by solving the system:
\bq
\left( (\dbar{t})^{-1}-\dbar{h} \right) \hs=\hu^{\rm inc}
\eq
where $\hu^{\rm inc}=(u^{\rm inc}(\x_q))_{1,\hdots, N}$.
\begin{figure}[h!]
	\begin{center}
		\includegraphics[width=10cm]{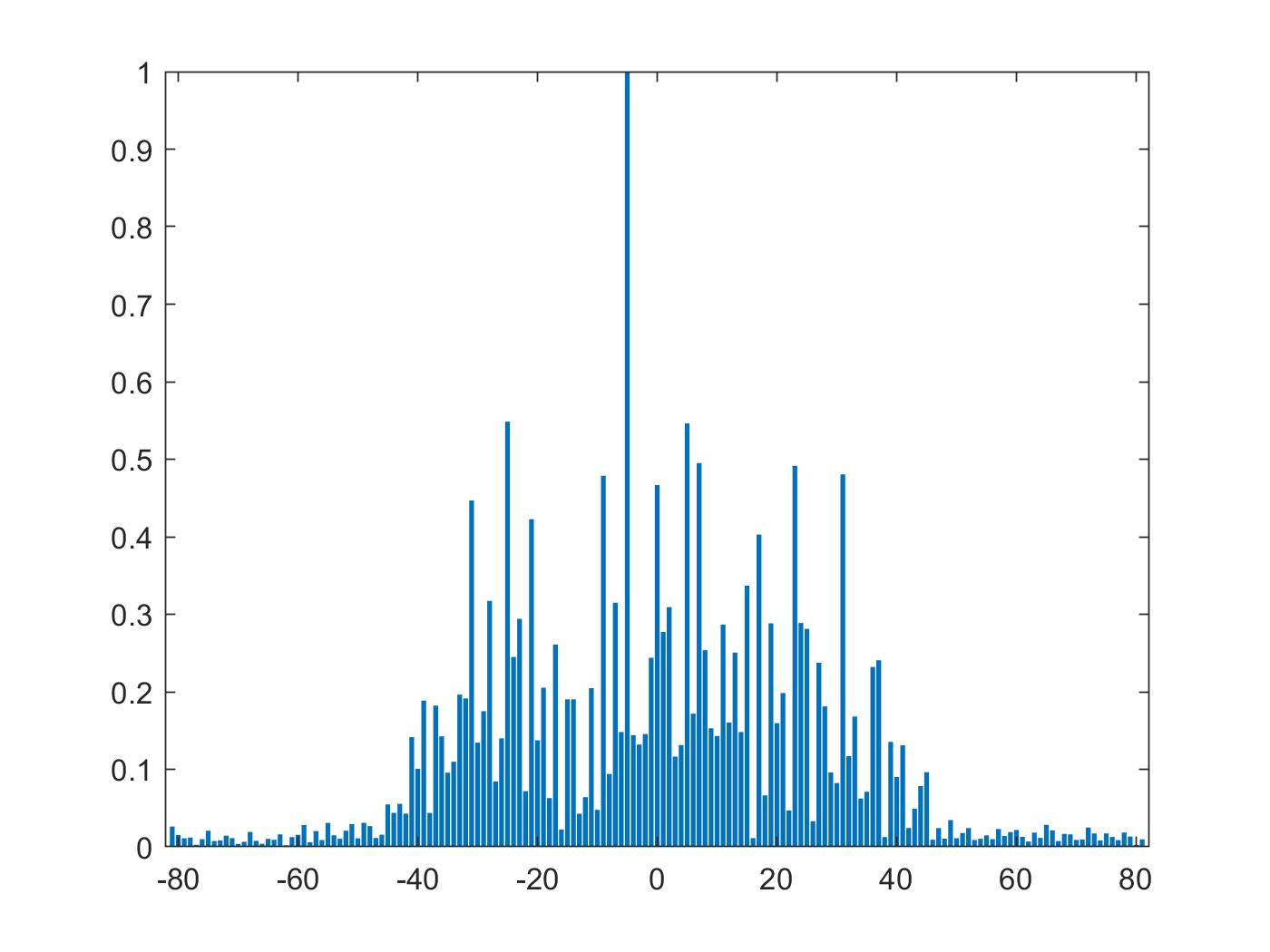}
		\caption{Normalized Discrete Fourier Transform coefficients of the sequence of coefficients $(s_k)$.\label{dft}}
	\end{center}
\end{figure}

\begin{figure}[h!]
	\begin{center}
		\includegraphics[width=10cm]{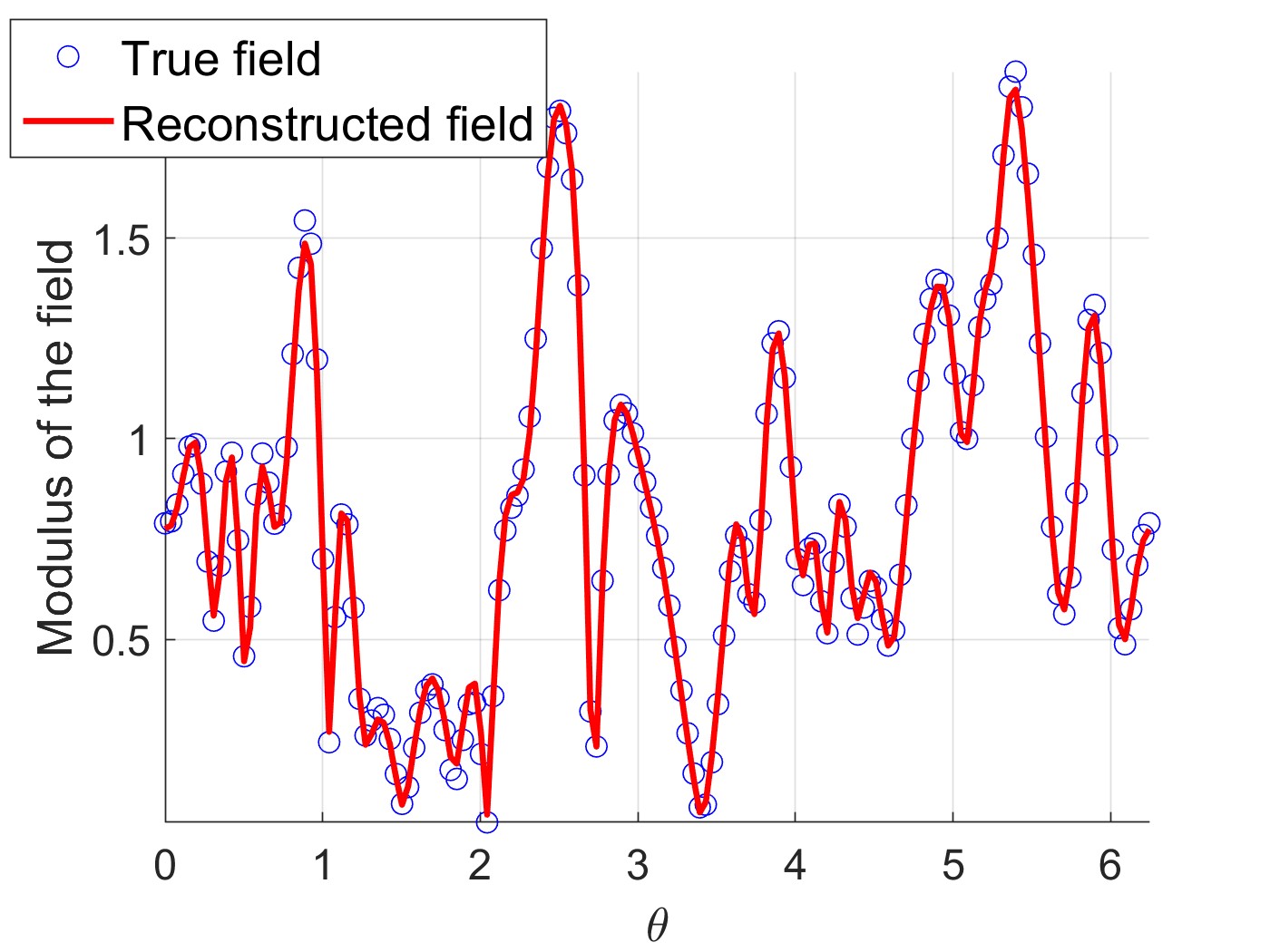}
		\caption{Modulus of the total field on a curve deduced from $\Gamma$ by a homothety of ratio 1.3. The red curve corresponds to the total field reconstructed by means of the single layer representation and the blue circles correspond to the total field computed by a direct summation over all the scatterers.\label{homo}}
	\end{center}
\end{figure}
\textcolor{black}{
The point is now to be able to represent the scattered field by means of a single layer potential as explained in section (\ref{intrep}). The existence of $\sigma^s$ being ensured by proposition (\ref{scatint}), it can be determined by solving the following integral equation:
$$
\int_{\Gamma} \sigma^s(\y') H^{(1)}_0(k|\y-\y'|) \, ds(\y')=u^s(\y),\y\in \Gamma,
$$
so as to obtain an approximation of the operator $\calT_{\rm loc}$ described in corollary (\ref{fundlemma}).
In order to do so numerically, that is, to obtain a discretized version of the density $\sigma^s$, we write a discrete version of the integral. Denoting $I_{np}=H^{(1)}_0(k|\y'_n-\y_p|)$, we have: 
\bq \label{sgl}
u^s(\y'_m)=\sum_{p=1}^{P} \sigma^s_p I_{np}, \,m=1 \hdots M,
\eq
where the points $\{\y'_m\}$ are put uniformly on $\Gamma$. The $P$ points $\{\y_p\}$ are chosen amongst the set $\{\frac{1}{2}(\y'_{p+1}+\y'_p)\}$, $\sigma_p^s$ is the average value of $\sigma^s$ over the arc $\wideparen{\y_p\y_{p+1}}$, times $\ell(\wideparen{\y_y'\y_{p+1}})$. 
 Finally, an overdetermined linear system ($M > P$) is obtained.
In matrix form, this can be written:
\bq
\dbar{I} \hsig^s=\hu^s,
\eq
where: $\hsig^s=(\sigma^s_p),\,\hu^s=(u^s(\x_m)) $ and $\dbar{I}$ is a $M \times P$ matrix with entries $I_{np}$. It is solved by means of a least square algorithm. On $\Gamma$, the values $(u^s(\y'_m))$ of the scattered field are otained by summing over the scatterers:
$$
 u^s_m \equiv u^s(\y'_m)=\sum_{q=1}^N s^0_q H^{(1)}_0(k|\y'_m-\x_q|), \, \y'_m \in \Gamma, m=1\ldots M.
$$
In matrix form, this reads as:
\bq
\dbar{H}  \hs=\hu^s.
\eq
Let us remark that $\dbar{H}$ is a $M \times N$ matrix. This system is solved by means of an iterative algorithm.
}
Heuristically, the number $P$ can be determined by recalling that the function $\sigma^s$ is periodic, since it is defined on a bounded curve. Consequently, the computation of the DFT of $(\sigma^s_p)_{p \in \{1,\ldots P\}}$ can indicate whether the approximation is good, by checking the decreasing of the Fourier coefficients. This is exemplified in Figure~\ref{dft} where we have computed the DFT of the finite sequence $(\sigma^s_p)$. It is important to have this criterion, since the discrete values of the density $(\sigma^s_p)$ do not have a decreasing behavior with $P$. The final value is $P=160$.

We are able to reconstruct with a very good precision the diffracted field. In fig.~\ref{true}, we have plotted a map of the total field outside the region where the scatterers are contained, obtained by summing the contributions of the dielectric rods, and in fig.~\ref{reco}, it is the reconstructed field. Both fields have been normalized so that their maximal value is equal to 1, in order to have the same color scale. For a more direct comparison, in fig.~\ref{homo}, we have plotted the total field on a curve deduced from $\Gamma$ by a homothety of ratio 1.3.  We stress that, thanks to this approach the representation of the scattered field is now ensured by $160$ terms instead of 4657.
\subsection{Numerical example for the Fast Monopoles Method}
\textcolor{black}{
 We consider now the situation depicted in fig. \ref{astroids5}, consisting of 5 rotated copies of the trefoil with the difference that not all the rods were kept. As before, each of the astroids is filled with dielectric rods but this time the radius is $0.02$ \footnote{For a radius of $0.01$, the direct multiple scattering approach requires more than $80$Gb of memory, exceeding the available memory on our system. However the Fast Monopole Method still works.}. The relative permittivity is $12$. There are in total $9890$ rods. The trefoils are disposed in such a way that they cannot be enclosed into non overlapping disk, so that the expansion in cylindrical harmonics cannot be used. The curves $\Gamma_j$ supporting the single layer densities are indicated.
\begin{figure}[h!]
	\begin{center}
		\includegraphics[width=10cm]{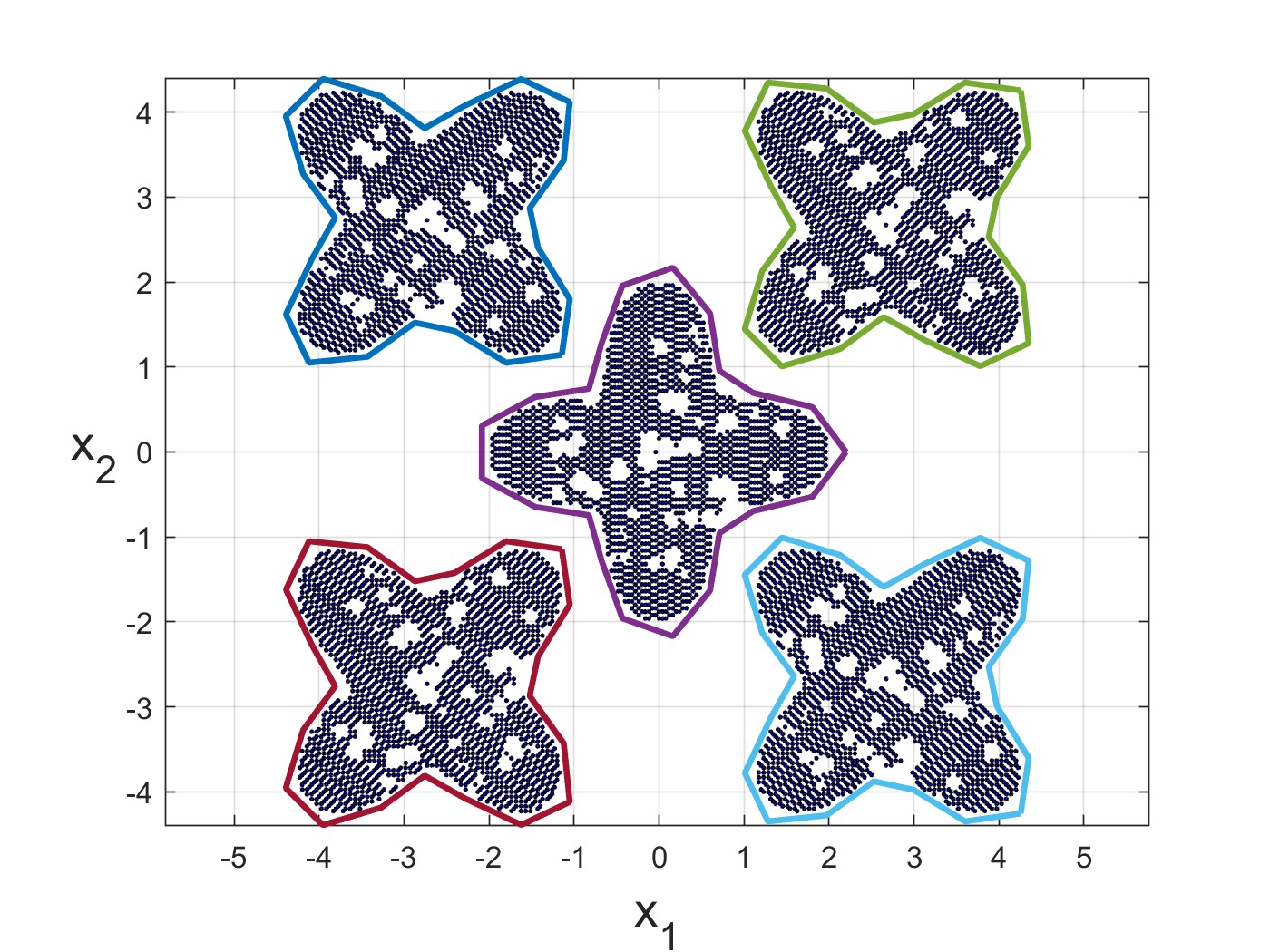}
		\caption{A collection of $9890$ scatterers contained in 5 domains. The curves around the domains are those supporting the monopoles. They are discretized by means of the number $M$ as given in table \ref{tablevaleur}.\label{astroids5}}
	\end{center}
\end{figure}
  The incident field is a plane wave $u^{\rm inc}(\x)=e^{-ik x_2}$. As in the first numerical example, the unit of length is that of the wavelength and we choose three values of the wavelength: $\lambda=20,10,5$.
In fig. \ref{lambda20}, we have plotted the modulus of the scattered field on a circle of radius $8$.  The fields coincide to better than $1.5\%$. 
\begin{figure}[h!]
	\begin{center}
		\includegraphics[width=10cm]{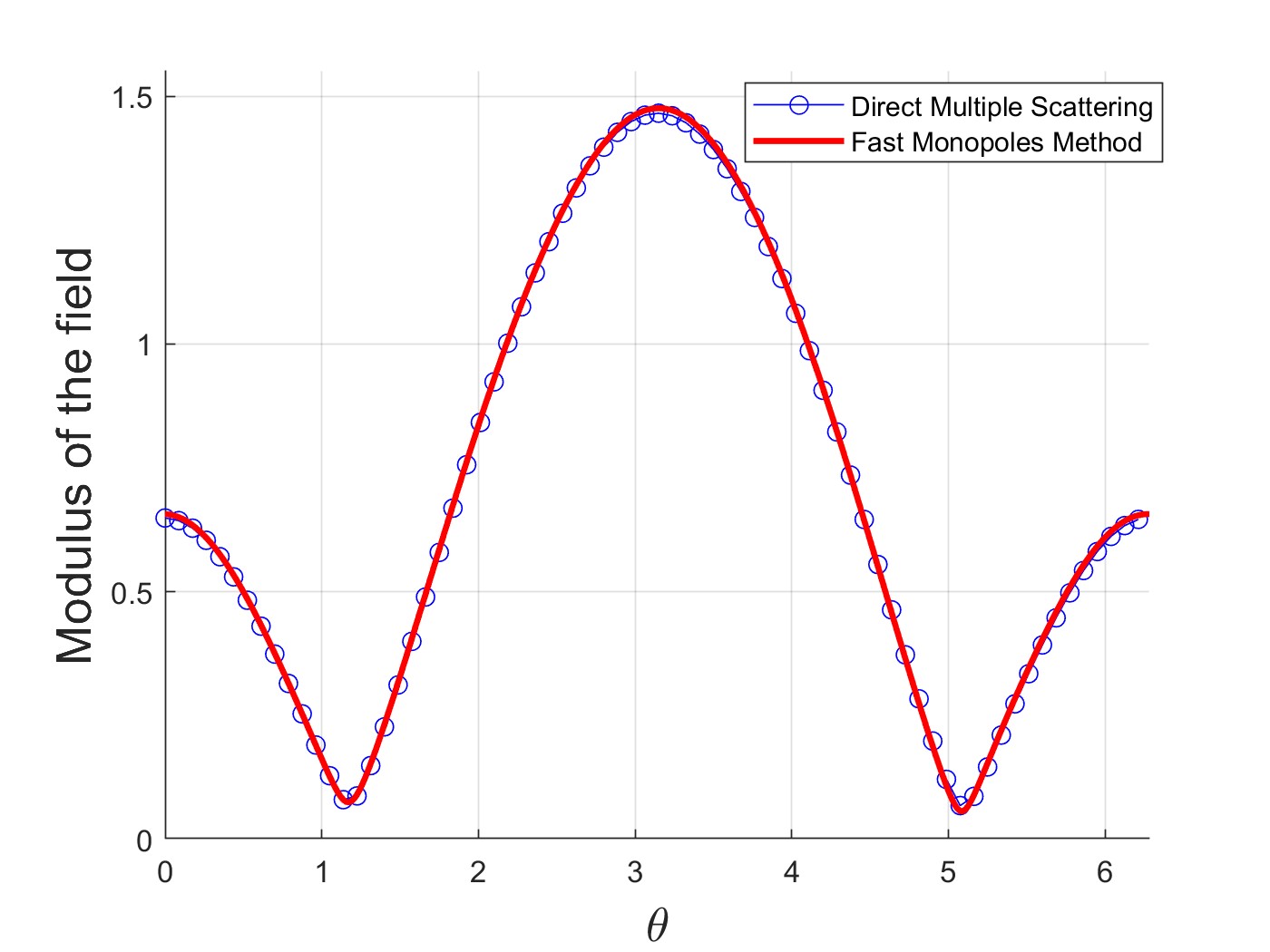}
		\caption{The field is computed on a circle of radius $8$ enclosing the scatterers. Here $\lambda=20$. The curve in solid line corresponds to the field computed by using directly the single layer representations of (cf. (\ref{sgl})). The  circles correspond to the field computed by means of the direct multiple scattering approach (cf. (\ref{MST})).  \label{lambda20}}
	\end{center}
\end{figure}
The running time for the brute force method and the Fast Monopole Method is given in table \ref{tablevaleur}. As the wavelength gets smaller, the sampling of the $\Gamma$ curves should be finer and the gain is reduced. 
\begin{table}[h!]
	\begin{center}
		\caption{Computation times, number of monopoles $P$ and number of points $M$ for the computation of the density. FMM stands for  the Fast Monopole Method and MS for the direct multiple scattering approach. The number of monopoles was chosen so as to obtain a error of less than $1.5$\% as compared to the direct method. The field is computed on a circle of radius $8$ enclosing all the scatterers.}
		\label{tablevaleur}
		\begin{tabular}{l|c|c|r} 
			\textbf{$\lambda$}& P (M) & \textbf{Time (FMM)} & \textbf{Time (MS)}\\
			\hline
			20 & 7 (66) & 45s & 1300s\\
			10 & 14 (140) & 90s & 1400s\\
			5 & 27 (265) & 220s & 1500s\\
		\end{tabular}
	\end{center}
\end{table}
\begin{figure}[h!]
	\begin{center}
		\includegraphics[width=10cm]{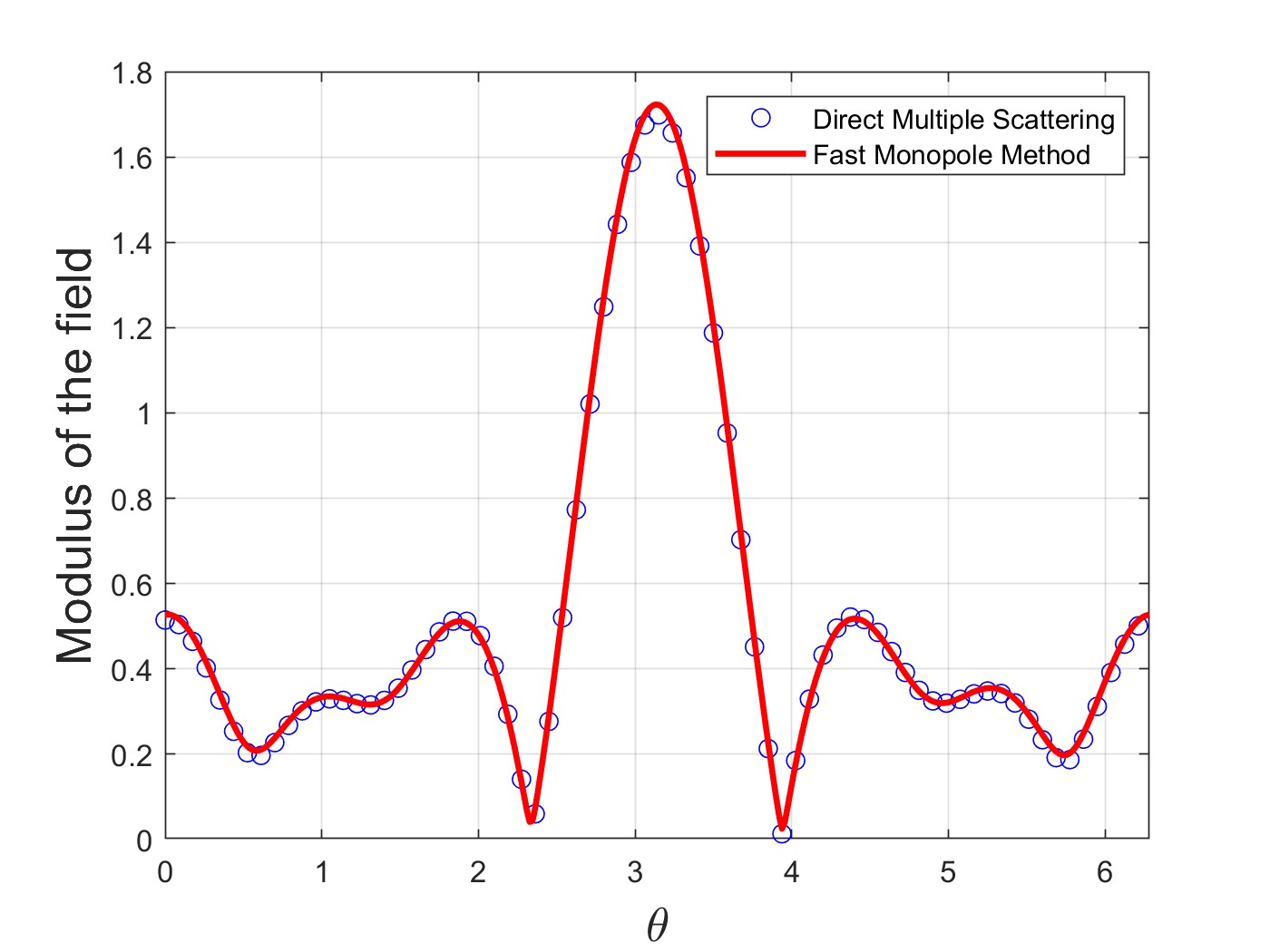}
		\caption{Same as fig. (\ref{lambda20}) with $\lambda=10$. \label{lambda10}}
	\end{center}
\end{figure}
\begin{figure}[h!]
	\begin{center}
		\includegraphics[width=10cm]{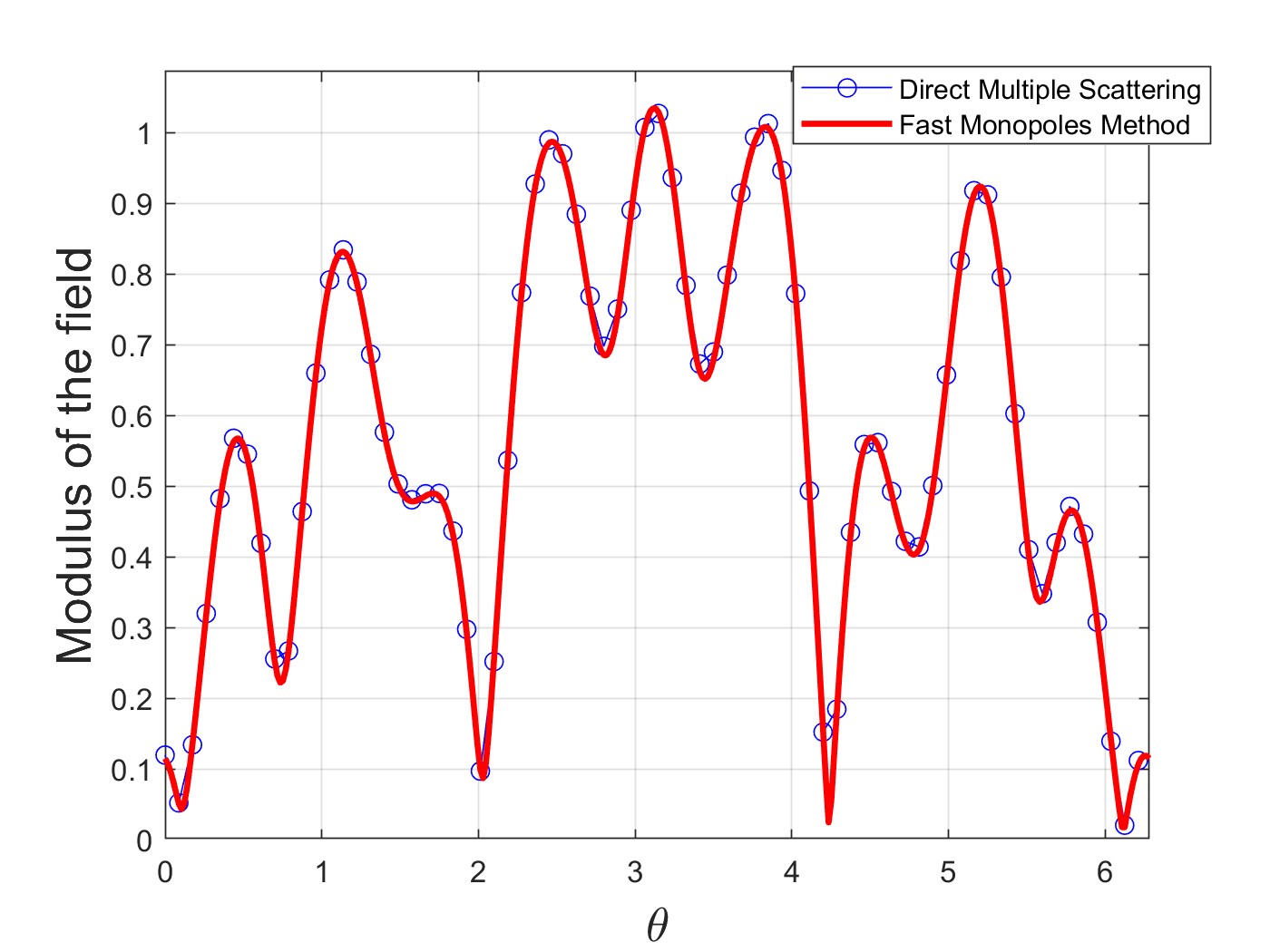}
		\caption{Same as fig. (\ref{lambda20}) with $\lambda=5$. \label{lambda5}}
	\end{center}
\end{figure}
}
In conclusion, we have proposed a new way of representing the field scattered by a collection of objects, by using a single layer representation. The scattered field is characterized by a density supported by any smooth surface enclosing the scatterers. From a numerical point of view, this representation allows to represent the field with a much smaller number of parameters, as compared to the direct representation as a sum over the scatterers. Since the monopoles are supported by a region of codimension 1, much less information is needed, as compared to a volumetric representation (of codimension 0). This result in a drastic reduction of the number of values required for representing the scattered field with a given precision. This results in a generalization of the representation of the field by spherical harmonics used in the Fast Multipole Method and extends this algorithm beyond the spherical geometry.

\end{document}